\theoremstyle{definition}
\newcommand{\id}{\mathrm d}
\newcommand{\co}{CO\textsubscript{2}\xspace}
\DeclareMathAlphabet\mathbfcal{OMS}{cmsy}{b}{n}
\newtheorem{theorem}{\bf Theorem}[section]
\newtheorem{corollary}{\bf Corollary}[section]
\newtheorem{proposition}{\bf Proposition}[section]
\begin{document}

\title{Investigating climate tipping points under various emission reduction and carbon capture scenarios with a stochastic climate model\thanks{Accepted for publication in Proc. Roy. Soc. A}}
\author[]{Alexander Mendez} 
\author[]{Mohammad Farazmand\thanks{Corresponding author's email address: farazmand@ncsu.edu}}
\affil{Department of Mathematics, North Carolina State University,
	2311 Stinson Drive, Raleigh, NC 27695-8205, USA}
\date{}
\maketitle

\begin{abstract}We study the mitigation of climate tipping point transitions using an energy balance model. 
The evolution of the global mean surface temperature 
is coupled with the \co concentration through the green house effect. We model the \co concentration
with a stochastic delay differential equation (SDDE), accounting for various carbon emission and capture scenarios.
The resulting coupled system of SDDEs exhibits a tipping point phenomena: if \co concentration exceeds a critical
threshold (around $478\,$ppm), the temperature experiences an abrupt increase of about six degrees Celsius.
We show that the \co concentration exhibits a transient growth which may cause a climate tipping point,
even if the concentration decays asymptotically.
We derive a rigorous upper bound for the \co evolution which quantifies its transient and asymptotic growths, and
provides sufficient conditions for evading the climate tipping point. 
Combining this upper bound with
Monte Carlo simulations of the stochastic climate model, we investigate the emission reduction and carbon capture
scenarios that would avert the tipping point.
\end{abstract}

\section{Introduction}
It is highly likely that anthropogenic greenhouse gases are responsible for more than half of the increase in the global mean surface temperature between 1951 to 2010~\cite{AR5}.
Therefore, reducing the atmospheric concentration of greenhouse gases must be a central component of any climate change mitigation strategy.
This reduction can be achieved in two ways.
One involves reducing \co emissions through alternative energy sources, increased fuel efficiency, and reduced consumption~\cite{AR5SPM}.
A second approach is expanding carbon sinks, e.g., by employing carbon capture and storage (CCS) technologies, which capture \co from large emitting sources (such as factories) or directly from the atmosphere and prepares it for long-term storage~\cite{Gibbins, Haszeldine}.

What complicates these matters is the possible existence of climate \emph{tipping points}, i.e., climate regimes where small changes significantly alter the future state of the system~\cite{Drijfhout2015,Lentonetal,schneider2019}. This tipping behavior can involve the sudden disruption of climatological and ecological processes such as melting of ice sheets or large-scale death of rainforests~\cite{Lenton}.
Even if emission reduction and carbon capture strategies lead to long-term \co reduction, its transient growth can trigger a climate tipping point
with adversely irreversible impact. Therefore, mitigation strategies not only have to ensure long-term reduction of greenhouse gasses, but also ensure that their transient response remains below the critical tipping point levels.

The purpose of the present study is to quantify emission reduction and carbon capture scenarios that mitigate climate tipping points.
To this end, we use a Budyko--Sellers-type model of the climate~\cite{Budyko,Sellers}. This energy balance model assumes that the Earth's radiation output is balanced by radiation input, and represents key aspects of glacial-interglacial climate transitions~\cite{Hogg}. It relies on the hypothesis that glacial cycles are triggered by variations in the Earth's orbit which alter the incoming solar radiation to the earth. These orbital cycles are amplified by limited feedback between greenhouse gases and temperature~\cite{McGuffie}. The resulting governing equations are a one-way coupling between 
the global mean surface temperature and \co concentration, where the temperature is affected by \co concentration through the greenhouse effect.

We model the evolution of the \co concentration by a stochastic delay differential equation (SDDE),
which takes into account the carbon emission and capture rates. This model allows us to parameterize the
possible reduction in \co emission rates as well as the capability of carbon sinks to absorb atmospheric \co.
In particular, we show that, even when the emission rates are reduced, the \co concentration exhibits a transient growth which may instigate a climate tipping point. We investigate the response of global mean surface temperature to various emission reduction and carbon capture scenarios, identifying the scenarios which will mitigate the climate tipping point transition. 

\subsection{Related work}
The study of climate tipping points can be divided into three broad categories: modeling, prediction, and mitigation.From the modeling perspective, it is important to accurately parameterize various contributing factors such as green house gasses, water vapor, clouds, and ice sheets~\cite{schneider1974,oerlemans1984,Dessler2013,zelinka2020}. Although here we use a very simple energy balance model, it takes into account the main
culprit, i.e., \co emissions. Furthermore, we choose the model parameters such that the resulting climate sensitivity agrees with the available estimates from more elaborate models.

Prediction is concerned with early warning signs embedded in observational data that may indicate an upcoming climate tipping point~\cite{Scheffer2012}. Several such precursors have been proposed, including critical slowing down~\cite{scheffer2008} and increased variability~\cite{Lenton2012}. Moreover, it has been observed that the stochastic component of the system changes its characteristics close to a tipping point. For instance, Held and Kleinen~\cite{Held2004} and Livina and Lenton~\cite{Livina2007} model the North Atlantic thermohaline circulation and note that the stochastic component of the data changes from white noise to red noise. The same transition is observed by Prettyman et al.~\cite{Prettyman2018} who studied tropical cyclones.
 
From the mitigation standpoint, it is widely accepted that a significant reduction in \co emissions is a necessity~\cite{cai2016,anderson2019,schneider2019}. A complementary solution is to remove \co from the atmosphere using carbon capture technologies~\cite{haszeldine2009}. For instance, studying the Atlantic thermohaline circulation (THC), Bahn et al.~\cite{Bahn2011} find that a drastic and fast \co reduction is required to avoid disrupting the THC. However, it is argued by Ritchie et al.~\cite{Ritchie2019} that crossing a threshold will not necessarily result in a simultaneous extreme change in the climate system. They find that the determining factor is how far the tipping point threshold is exceeded and for how long. 

Previous studies mainly focus on the asymptotic climate state as a result of increasing \co concentration. In contrast, a main focus of our work is the transient climate dynamics in response to various emission reduction and carbon capture scenarios. In particular, we show that, under certain emission reduction and carbon capture scenarios, the \co concentration exhibits a transient growth large enough to trigger a climate tipping point, even though the concentration decays asymptotically. We note that, beyond emission reduction and carbon capture, geoengineering ideas have been proposed~\cite{keith2000}. These methods seek to increase the reflected energy of the Sun by releasing aerosol particles into the stratosphere. These geoengineering ideas are beyond the scope of the present work and are not considered here.

\subsection{Outline of this paper}
This paper is organized as follows. We discuss the climate model and its parameters in sections~\ref{sec:model} and~\ref{sec:params}. In section~\ref{sec:analysis}, we derive a rigorous upper bound for the \co levels under various emission reduction and carbon capture scenarios.
We discuss the temperature variations under each scenario in section~\ref{sec:results} using direct numerical simulations. Section~\ref{sec:conc} contains our concluding remarks.
\section{Stochastic climate model}\label{sec:model}
We use two stochastic differential equations for the climate system, modeling the global mean surface temperature, $T$, and average concentration of CO\textsubscript{2}, $C$, in the atmosphere.
The equation for temperature is a Budyko-Sellers-type model, derived from the balance between incoming and outgoing radiations~\cite{Budyko,Sellers}. The resulting temperature model reads,
\begin{equation}
\label{eqn:BGS}
c_T \frac{\id T}{\id t}= F(T,C) :=  Q_0 (1 - \alpha(T)) + S +  A \ln\left(\frac{C}{C_p}\right) - \epsilon\sigma T^4,
\end{equation}
where the temperature $T$ is in units of Kelvin, $C$ is the concentration of CO\textsubscript{2} in parts per million (ppm), and $c_T$ is the thermal inertia in units of $\mbox{Jm}^{-2}\mbox K^{-1}$. 
The term $Q_0 (1 - \alpha(T))$ represents short-wave radiation from the Sun, where $Q_0$ is the solar input in units of $\mbox{Wm}^{-2}$. The multiplier $\alpha(T)$ denotes temperature dependent albedo that accounts for the light reflecting off the Earth surface. The term $S +  A \ln(C/C_p)$ models the effect of greenhouse gases, where $A$ (in units of $\mbox{Wm}^{-2}$) is the direct forcing of CO\textsubscript{2} and determines the sensitivity of the climate equilibrium~\cite{Dijkstra2015}. The parameter $S$ represents the trapping of outgoing radiations by greenhouse gases~\cite{Hogg}. 
The constant $C_p$ denotes the preindustrial concentration of \co in units of ppm. 
Finally, the last term $ - \epsilon\sigma T^4$  represents long-wave radiation from the Sun, where $\sigma T^4$ represents the outgoing long wave radiation that is modified by the emissivity $\epsilon$. 

Our choice of the temperature dependent albedo function $\alpha(T)$ shown in Fig.~\ref{fig:albedo} is similar to~\cite{Ashwin_2019}, although we use a different set of parameters as reported in Table~\ref{tab:table}.  
In particular, we define
\begin{equation} 
\alpha(T) = \alpha_1(1 - \Sigma(T)) + \alpha_2\Sigma(T),
\label{eq:albedo}
\end{equation}
which transitions smoothly between albedo parameters $\alpha_1$ and $\alpha_2$, where
$\alpha_1$ represents the current global albedo, while $\alpha_2$ represents the global albedo of the Earth. The case $\alpha_1>\alpha_2$ corresponds to the melting of ice on the Earth's surface, whereas the case $\alpha_1<\alpha_2$
represents the formation of ice. Here, we only consider the case $\alpha_1>\alpha_2$ which conforms to the present trends.

The melting of ice depends on the temperature levels. The function $\Sigma(T)$ is chosen to reflect this temperature dependence, so that $\alpha(T)$ 
transitions smoothly between the albedo $\alpha_1$ at temperature $T_1$, to the threshold of $\alpha_2$ at temperature $T_2$. More precisely, we define
\begin{equation} 
\Sigma(T)= \frac{T - T_1}{T_2 - T_1} H(T-T_1)H(T_2 - T) + H(T-T_2),
\label{eq:sigma}
\end{equation}
where
\begin{equation} 
H(T) = \frac{1+ \tanh(T/T_\alpha)}{2},
\label{eqn:heaviside}
\end{equation}
is a smooth approximation of the unit Heaviside function. The parameter $T_\alpha$ controls the transition rate between temperatures $T_1$ and $T_2$.
\begin{figure}
	\centering
	\includegraphics[width=0.65\textwidth]{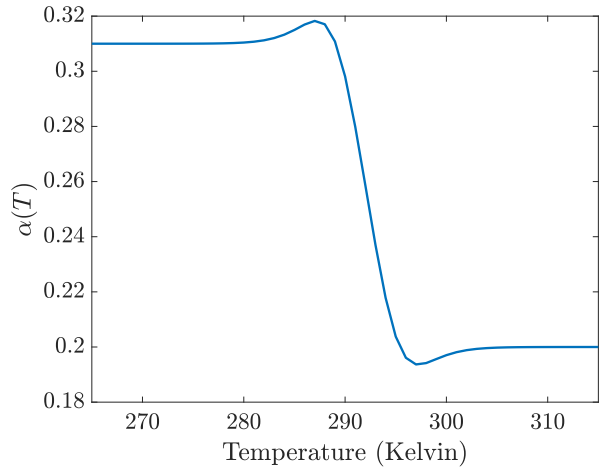}
	\caption{Temperature dependent albedo function $\alpha(T)$ with parameters $T_1 = 289$, $T_2 = 295$, $\alpha_1 = 0.31$, $\alpha_2 = 0.2$, and $T_\alpha = 3$.}
	\label{fig:albedo}
\end{figure}
Following~\cite{Ashwin_2019}, we model the unresolved subgrid processes by a stochastic term $\eta_T\id W_T$, where $\eta_T>0$ is a constant amplitude and $W_T(t)$ denotes the standard Wiener process. The subscript $T$ is used to distinguish the stochastic processes affecting the temperature from those affecting \co, to be described shortly.
Adding this stochastic term, the temperature model reads
\begin{equation}
c_T \id T = F(T,C) \id t + \eta_{T} \id W_T,
\label{eqn:TempSDE}
\end{equation}
where $F(T,C)$ is defined in equation~\eqref{eqn:BGS}. Note that, although the driving stochastic force is a Gaussian white noise, the response $T$ itself is non-Gaussian and non-white owing to the nonlinear nature of $F(T,C)$.

To close the equations, we also need to model the evolution of \co. Dijkstra and Viebahn~\cite{Dijkstra2015} prescribe $C(t)$ as an explicit function of time.
Ashwin and von der Heydt~\cite{Ashwin_2019} model the \co evolution as a random walk confined to an interval $[C_1,C_2]$. To mimic the current trends and to allow for possible 
emission reduction and carbon capture, we model the \co evolution as a stochastic delay differential equation,
\begin{equation} 
\id C= \big[\underbrace{\;\beta(t) C(t)\;}_{\mbox{source}} - \underbrace{GC(t-\tau)}_{\mbox{sink}}\big]\id t + \eta_C \id W_C,
\label{eqn:CO2SDE}
\end{equation}
where $\beta(t)$ is a time-dependent emission rate, $G$ is a time-independent carbon capture rate, and $\tau$ is a time delay.
The uncertainties in \co concentration are modeled with the stochastic term $\eta_C\id W_C$, where $\eta_C$ is a constant noise intensity
and $W_C(t)$ is a standard Wiener process.

The source term in equation~\eqref{eqn:CO2SDE} models the \co emission into the atmosphere. 
We allow for a time-dependent rate $\beta(t)$ to model reduction or enhancement of the \co emissions.
The sink term, on the other hand, models the \co captured from the atmosphere either through natural phenomena (e.g., photosynthesis~\cite{Sedjo2012})
or through artificial technologies (e.g., chemical looping~\cite{Song2018}).
For the \co sinks, we allow for a time delay $\tau$ to model the time between carbon capture and its effect being felt in the atmospheric concentration. This type of delay is typical in control theory, where there is often a lag between a modifying action and its actualization~\cite{dorf2011,Farazmand_2020}. 
If this delay is non-existent or negligible, one can set $\tau=0$. Here, we set $\tau=1$ year, representing the time it takes for CO2 data to be updated and the carbon capture strategies adjusted correspondingly. Nonetheless, we have varied the time delay in the interval $1-20$ years (not shown here) and only observed variations of approximately $10$ ppm in the \co evolution, which do not significantly alter the results.

Equations~\eqref{eqn:TempSDE} and~\eqref{eqn:CO2SDE} form a closed set of equations modeling the global mean temperature $T$ and \co concentration $C$.
Our main goal here is to investigate whether a combination of emission reduction and carbon capture may avert a possible upcoming climate tipping point. 
To this end, we consider various scenarios in terms of emission reduction and carbon capture as discussed in section~\ref{sec:params}.

\section{Tipping points and model parameters}\label{sec:params}
The parameter values used in the model are listed in Table~\ref{tab:table} and their choice is discussed in Appendix~\ref{app:params}.
For a prescribed \co concentration, the temperature model~\eqref{eqn:TempSDE} exhibits three distinct regimes as shown in Fig.~\ref{fig:scurve}. 
Regime 1: If $C<C_1=378$ ppm, the temperature has a single stable equilibrium satisfying $F(T, C)=0$.
\begin{table}
	\centering
	\caption{Model parameters and their physical dimensions. See Appendix~\ref{app:params} for more detail on the choice of parameter values.}
\begin{tabular}{|p{1.4cm}|p{2.3cm}p{1.4cm}|p{1.4cm}|p{1.8cm}p{1.2cm}|}
	\hline
	Parameter & Value  & Units & Parameter & Value  & Units \\
	\hline
	\rule{0pt}{10pt}$c_T$ & $5.0 \cdot 10^8$  & $J m^{-2} K^{-1}$ & $Q_0$ & $342$  & $W m^{-2}$ \\
	$\epsilon$ & $1$ & $-$ & $\sigma$ & $5.67 \cdot 10^{-8}$ &$s^{-1/2}$ \\
	$A$ & $20.5$ & $Wm^{-2}$ & $G_0$ & $2.37 \cdot 10^{-10}$ & $s^{-1}$ \\
	$C_0$ & $410$ & $ppm$& $\alpha_1$ & $0.31$ & $-$ \\
	$\alpha_2$ & $0.2$ & $-$ & $T_{1}$ & $289$ & $K$\\
	$T_{2}$ & $295$ & $K$ & $T_\alpha$ & $3$ & $K$\\
	$a $& $4.30\cdot 10^{-10}$ & $s^{-1}$ &$\eta_T$ & $5\cdot10^{-6}$ & $s^{-1/2}$\\
	$\tau$ & $60\cdot60\cdot24\cdot365$ & $s$ & $S$ & $150$ &$Wm^{-2}$\\
	$T_0$ & $288$ & $K$ & $C_{p}$ & $280$ &$ppm$\\
	$\eta_C$ & $6.5\cdot10^{-7}$ & $s^{-1/2}$&  &  &\\
	\hline
\end{tabular}
\label{tab:table}
\end{table}
\FloatBarrier

Regime 2: For \co concentrations $C \in (C_1,C_2)$ a bifurcation takes place, whereby two additional equilibria are born. 
One of the new equilibria is unstable (dashed line in Fig.~\ref{fig:scurve}) 
whereas the other one is stable. As a result, in this regime the temperature model is bistable.
Regime 3: If $ C>C_2=478.6$, the model switches back to a single stable equilibrium. However, in this regime, the global mean temperature is 
significantly higher than the equilibrium temperature in regime 1.
\begin{figure}[!htb]
	\centering
	\includegraphics[width=0.65\textwidth]{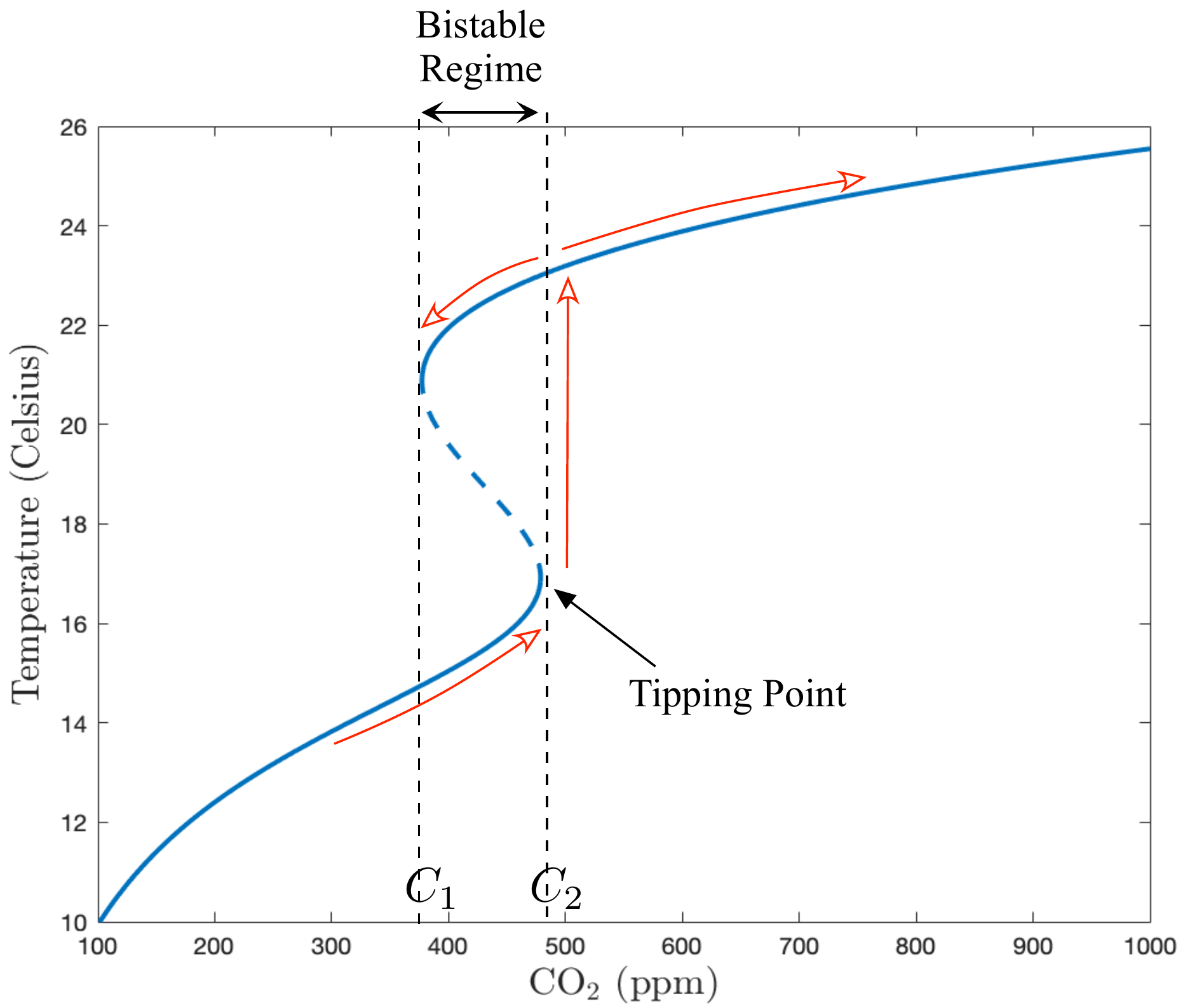}
	\caption{Bifurcation diagram of the temperature $T$ versus the \co concentration $C$. The curve marks the equilibrium states of
		equation~\eqref{eqn:BGS}, obtained by setting $F(T,C)=0$, with parameters in Table~\ref{tab:table}. Solid parts of the curve denote stable equilibria and the dashed part denotes unstable equilibria.}
	\label{fig:scurve}
\end{figure}

In our model, the \co concentration is not constant, instead it evolve according to the SDDE~\eqref{eqn:CO2SDE}. We choose the 
initial conditions $T(0)=15\degree$C and $C(0) = 410\,$ppm which reflect the current global mean temperature and \co concentration, respectively~\cite{NASAtemp, Hansen2010,noaa}. 
The model parameters, listed in Table~\ref{tab:table}, are chosen such that the current climate lies in bistable regime 2 near the lower branch of equilibria.
If the current trends continue, i.e., the \co concentration keeps increasing, the temperature $T$ also continues to increase gradually. Eventually, one reaches the tipping point 
$C=C_2$ where a slight increase in \co concentration leads to a dramatic increase in the temperature by about six degrees in Celsius. Our goal is to determine the emission reduction
and carbon capture scenarios that would avert this catastrophic climate tipping point.

To this end, we allow for a time-dependent emission rate $\beta(t)$ as shown in figure~\ref{fig:beta}. It contains three adjustable periods. First is a period of inaction, up to time $t=t_1$, where the emission rate remains constant at its current level $a$. It is followed by a reduction period, $t_1<t<t_2$ where the emission rate decreases linearly toward its terminal value $b=a/n$. This reduction continues for a period of $\Delta t= t_2-t_1$ years until it plateaus at time $t=t_2$.
The period $t>t_2$ constitutes the terminal stage where the emission rate remains at the constant level $b$.
\begin{figure}
	\centering
	\includegraphics[width=0.65\textwidth]{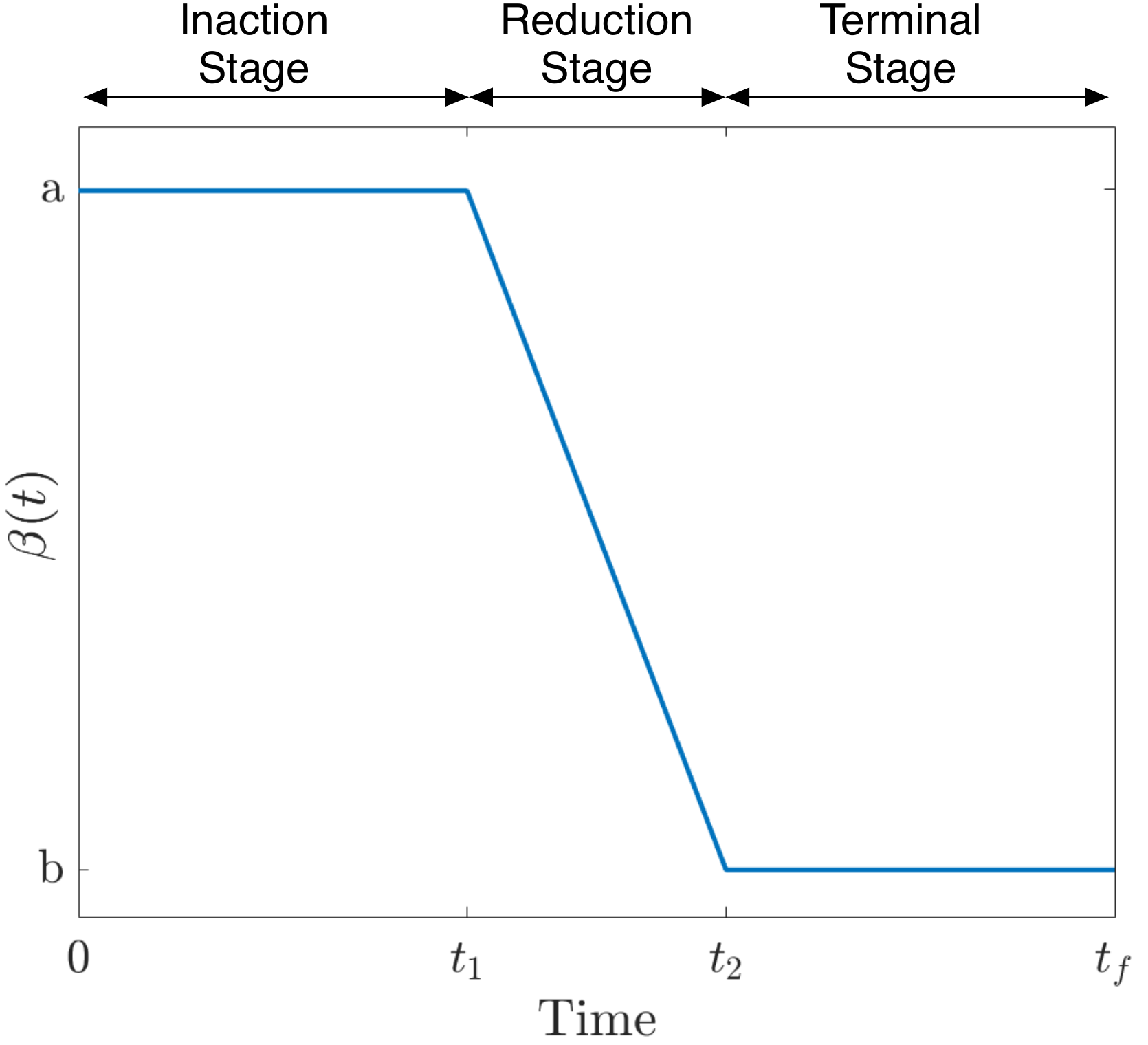}
	\caption{The carbon emission rate $\beta(t)$. We assume the rate is initially equal to 
		$a$ and remains so until time $t_1$. It is then followed with a linear reduction over the time window $\Delta t = t_2-t_1$.
		Eventually the emission rate reaches the plateau $b =a/n$ for $t>t_2$.}
	\label{fig:beta}
\end{figure}

Therefore, the emission rate has the general form
\begin{equation} \label{eq:beta}
\beta (t) =
\begin{cases}
a, & \quad 0 \leq t < t_1, \\
a + \frac{b-a}{t_2-t_1}(t - t_1) , & \quad t_1 \leq t \leq t_2, \\
b, & \quad t_2 < t \leq t_f,
\end{cases}
\end{equation}
where $b=a/n$ for an integer $n\geq 2$. The current emission rate $a$ is estimated from the data in Ref.~\cite{essd}. They estimate that there are $11.8 \pm 0.9$ gigatons of carbon emissions each year from industrial processes and land usage. Converting this to concentration yields $5.556 \pm 0.4237$ ppm added to the atmosphere each year. The current emission rate $a$ is then estimated from 
$aC(0) = 5.556/(365\cdot24\cdot60\cdot60)$ ppm/s, where $C(0) \simeq 410$ ppm is the current \co concentration.
This yields $a = 4.2971\cdot 10^{-10}\; \mbox s^{-1}$. In the following sections, we study the evolution of the global mean temperature $T$ under various choices of the remaining free 
parameters, $t_1$, $t_2$ and $n$.

To conclude the choice of parameters, we need to specify the parameters $G$ and $\tau$ in the \co model~\eqref{eqn:CO2SDE}.
Recall that the parameter $G$ denotes the capacity of carbon sinks to remove \co from the atmosphere. Assuming that carbon capture takes place only due to 
natural phenomena, the value of $G$ can be estimated from the
data in Ref.~\cite{essd}. They estimate that approximately $6.5$ gigatonnes of carbon are taken from the atmosphere each year from natural processes such as photosynthesis or absorption by the oceans. Converting this to concentration yields $3.0603$ ppm removed from the atmosphere each year. The parameter $G$, then can be estimated by 
$GC(0) = 3.0603/(365\cdot24\cdot60\cdot60)$ ppm/s, which yields $G = 2.3669 \cdot 10^{-10}\;\mbox s^{-1}$. 

Finally, we allow for a delay $\tau$ in the \co sinks, which can be set to zero if no such delay exists. We have not been able
to determine this parameter from the available literature. The results in Section~\ref{sec:results} are reported for the delay time $\tau$ equal to one year. However, we examined delay times up to three years and did not observe a significant effect on the results.
\section{Transient and asymptotic growth of \co}\label{sec:analysis}
The system of equations~\eqref{eqn:TempSDE} and~\eqref{eqn:CO2SDE} is a one-way coupling between the temperature $T$ and the \co concentration $C$, whereby
the \co concentration evolves independently of the temperature. As discussed in section~\ref{sec:params}, if the \co concentration increases beyond $C_2=478.6\,$ppm, the climate system undergoes a 
tipping point transition leading to an abrupt increase of the global mean surface temperature (see Fig.~\ref{fig:scurve}). 

The main objective of this paper is to determine emission reduction and carbon capture scenarios that ensure the mitigation of this tipping point. 
Extreme events, such as tipping point transitions, have been the subject of much research, with an emphasis on their causal mechanisms~\cite{ansmann2013,Farazmande1701533,farazmand2019a}, 
probabilistic quantification~\cite{weinan2006,weinan2010,Dematteis2018,mohamad16}, and
data-driven prediction~\cite{bialonski2015,Farazmand2017,scheffer2010,scheffer2008}. Only recently, control strategies for mitigating extreme events have been proposed~\cite{bialonski2015,Farazmand_2020,farazmand2019b}. 
In particular, Farazmand~\cite{Farazmand_2020} proposes a time-delay feedback control
for mitigating noise-induced transitions in multistable systems. This control strategy relies on the stationary equilibrium density of the system. Given the time-dependent emission rate $\beta(t)$
and the linearity of the \co model~\eqref{eqn:CO2SDE}, it does not possess such an equilibrium density. Consequently, the framework of Ref.~\cite{Farazmand_2020} is not applicable here.

Therefore, we take a different approach here and derive a quantitative upper bound for the \co concentration which determines
whether the climate tipping point transitions can be averted.
To this end, we consider the \co model~\eqref{eqn:CO2SDE} without the stochastic term,
\begin{equation} 
\frac{\id C}{\id t} := \beta(t) C(t) - GC(t-\tau),\quad C(s) = C_0,\quad \forall s\in[-\tau,0].
\label{eqn:CO2DE}
\end{equation}
Note that~\eqref{eqn:CO2DE} is a delay differential equation which requires the initial condition $C(s)$ to be specified as a function over the 
interval $[-\tau,0]$. Since we consider a relatively short delay $\tau$ of one year, the \co concentration does not change significantly over this
time interval. As a result, we assume the constant initial condition $C(s) = C_0$ for all $s\in [-\tau,0]$, where $C_0=410$ ppm is the current level of \co concentration.
The following theorem provides an upper bound for the solutions $C(t)$. Adding the stochastic term $\eta_C\id W_C$ only leads to small fluctuations around this upper bound, without fundamentally altering the results.
\begin{theorem}\label{thm:main}
Let $0 < T \leq \infty$, and $C_0 \geq 0$. Assume that $\beta: [0, T) \to \mathbb{R}^+$ is locally integrable and that
$C:[-\tau,T)\to \mathbb R^+$ is a measurable, locally integrable function solving the delay differential equation~\eqref{eqn:CO2DE}. Then
\begin{equation}
C(t) \leq C_0\exp{\left(\int_{0}^{t} \big(\beta(s) + \gamma(s) \big)\id s \right)},
\label{eq:ub}
\end{equation}  
where 
\begin{equation}
\gamma(t) = -G\exp{\left(-\int_{t-\tau}^{t} \beta(s) \id s\right)}.
\label{eq:gamma}
\end{equation}
\end{theorem}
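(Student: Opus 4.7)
The plan is a two-step Grönwall-type argument. The first step exploits the fact that $C$ and $G$ are nonnegative to obtain a pointwise lower bound for the delayed value $C(t-\tau)$ in terms of the current value $C(t)$; the second step substitutes this bound into the sink term of~\eqref{eqn:CO2DE}, converting the DDE into a linear differential inequality in $C$ alone, to which Grönwall applies a second time.

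For the first step, because $G>0$ and $C\geq 0$, equation~\eqref{eqn:CO2DE} immediately yields the pointwise inequality $\id C/\id t \leq \beta(t)C(t)$ almost everywhere on $[0,T)$. Applying Grönwall on any subinterval $[t-\tau,t]\subset[0,T)$ gives $C(t)\leq C(t-\tau)\exp\left(\int_{t-\tau}^{t}\beta(s)\id s\right)$. Rearranging and recalling the definition of $\gamma$ in~\eqref{eq:gamma}, this is equivalent to $-G\, C(t-\tau)\leq \gamma(t)C(t)$ (the inequality orients correctly because $\gamma(t)<0$ and $C(t)\geq 0$).

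For the second step, I substitute this lower bound for the sink back into~\eqref{eqn:CO2DE}, which becomes $\id C/\id t \leq (\beta(t)+\gamma(t))C(t)$. A final Grönwall integration on $[0,t]$ with $C(0)=C_0$ then delivers the claimed bound~\eqref{eq:ub}.

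The main obstacle is the short-time regime $t\in[0,\tau)$, where $t-\tau<0$ lies in the history interval on which $C\equiv C_0$ and where $\beta$ is not defined by hypothesis. I would handle this by extending $\beta$ by zero on $[-\tau,0)$ (equivalently, interpreting the defining integrals as running over $[\max(t-\tau,0),t]$) and checking directly that the lower bound $C(t-\tau)=C_0 \geq C(t)\exp\left(-\int_0^{t}\beta(s)\id s\right)$ still follows from the same Grönwall comparison applied on $[0,t]$. A secondary care-point is that $\beta$ is only assumed locally integrable, so throughout I would rely on the integral form of Grönwall (valid for measurable, locally integrable data) rather than the differential form, and interpret the differential inequality $\id C/\id t\leq\beta(t)C(t)$ in the absolutely continuous sense provided by~\eqref{eqn:CO2DE}.
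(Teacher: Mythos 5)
Your argument is correct, but it follows a genuinely different route from the paper. The paper proves the bound by generalizing a Gr\"onwall--Bellman-type inequality with delay due to Gy\H{o}ri and Horv\'ath: it first establishes an abstract comparison result for integral inequalities with a delayed sink (Proposition~\ref{prop:ub}, proved by a contradiction argument with an auxiliary level $L_1>L$), then applies variation of constants to absorb the source term (Theorem~\ref{thm:ub_gen}), and finally exhibits $\gamma(t)=-G\exp(-\int_{t-\tau}^{t}\beta)$ as an admissible choice satisfying the implicit inequality~\eqref{eq:gamma_ineq}. You instead bootstrap: the trivial source-only bound $C'\le \beta C$ (valid because $C\ge 0$, $G>0$) gives $C(t-\tau)\ge C(t)\exp(-\int_{t-\tau}^{t}\beta)$, which, fed back into the sink, yields the scalar differential inequality $C'\le(\beta+\gamma)C$ and hence the bound by a single integrating-factor comparison; the history interval is handled directly since $C(t-\tau)=C_0$ there. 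This is shorter and more elementary, but it leans on structure the abstract route does not need: nonnegativity of $C$ and $\beta$, the constant history, and the fact that $C$ satisfies the pointwise (absolutely continuous) differential relation rather than merely an integral inequality --- all of which hold under the theorem's hypotheses. The paper's machinery buys generality (arbitrary history functions via the constant $K$, inequality rather than equality solutions, and a whole family of admissible $\gamma$'s), which is why its statement and proof are heavier. One small technical caution on your write-up: the classical integral form of Gr\"onwall requires a nonnegative kernel, and $\beta+\gamma$ changes sign; the correct justification of your final step is the absolutely-continuous integrating-factor argument (i.e. $\tfrac{\id}{\id t}\big[C(t)\exp(-\int_0^t(\beta+\gamma))\big]\le 0$ a.e.), which your last sentence gestures at and which is available because $C$ solves~\eqref{eqn:CO2DE} in integral form; also note that for $t<\tau$ the exponent $\int_{t-\tau}^{t}\beta$ in~\eqref{eq:gamma} needs a convention for $\beta$ on $[-\tau,0)$, and your zero-extension yields a bound at least as strong as the stated one, so no loss occurs.
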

\begin{proof}
See Appendix~\ref{app:ub}.
\end{proof}

Recall that the climate tipping point occurs if the \co levels exceed $C_2=478.6$ ppm. Therefore, ensuring that the upper 
bound~\eqref{eq:ub} is uniformly below $C_2$ is a sufficient condition for avoiding the tipping point.
This upper bound encapsulates the competition between \co emission rate $\beta(t)$ and the carbon capture rate $G$. 
The negative-valued function $\gamma(t)$, appearing in the exponent of the upper bound, is proportional to the carbon capture rate $G$. 
But it also depends on the emission rate $\beta(t)$ due to the delay $\tau$. This leads to a non-trivial dependence of the upper bound on 
the carbon emission and capture rates. In section~\ref{sec:results}, we investigate the shape of this upper bound for various parameter values.

The upper bound~\eqref{eq:ub} also provides a sufficient condition for the asymptotic decay of \co concentration. 
\begin{corollary}\label{cor:asym}
Assume the conditions of Theorem~\ref{thm:main}. If the carbon capture rate satisfies $G>be^{b\tau}$, then we have
$\lim_{t\to \infty} C(t) =0$
\end{corollary}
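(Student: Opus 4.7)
The plan is to apply Theorem~\ref{thm:main} directly and then exploit the eventual plateau of the emission rate $\beta(t)$ to show that the exponent in the upper bound~\eqref{eq:ub} eventually grows linearly toward $-\infty$. Since $C(t)\geq 0$ by assumption, a squeeze argument will then force $C(t)\to 0$.

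First I would use the structure of $\beta$ prescribed in~\eqref{eq:beta}: for every $t>t_2$ the emission rate equals the constant $b=a/n$. Consequently, for any $t>t_2+\tau$, the entire interval $[t-\tau,t]$ lies in the terminal stage, so $\int_{t-\tau}^{t}\beta(s)\,\mathrm{d}s = b\tau$. Plugging this into~\eqref{eq:gamma} gives $\gamma(t)=-G e^{-b\tau}$, and therefore
\begin{equation*}
\beta(t)+\gamma(t) \;=\; b - G e^{-b\tau} \qquad \text{for all } t>t_2+\tau.
\end{equation*}
The hypothesis $G>b e^{b\tau}$ is equivalent to $G e^{-b\tau}>b$, so $\delta := G e^{-b\tau}-b>0$ and $\beta(t)+\gamma(t) = -\delta$ on $(t_2+\tau,\infty)$.

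Next I would split the integral in the exponent of~\eqref{eq:ub} at $t_2+\tau$. On $[0,t_2+\tau]$, both $\beta$ and $\gamma$ are locally integrable (note $|\gamma(s)|\leq G$), so the piece $M:=\int_{0}^{t_2+\tau}(\beta(s)+\gamma(s))\,\mathrm{d}s$ is a finite constant. On $[t_2+\tau,t]$ the integrand equals $-\delta$, giving
\begin{equation*}
\int_{0}^{t}\bigl(\beta(s)+\gamma(s)\bigr)\,\mathrm{d}s \;=\; M-\delta\,(t-t_2-\tau).
\end{equation*}
Therefore Theorem~\ref{thm:main} yields $0\leq C(t)\leq C_0 e^{M}\,e^{-\delta(t-t_2-\tau)}$ for all $t>t_2+\tau$, and letting $t\to\infty$ proves $\lim_{t\to\infty}C(t)=0$.

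There is essentially no serious obstacle here; the entire argument is a one-line consequence of the upper bound combined with the plateau structure of $\beta$. The only point requiring mild care is recognizing that one needs $t>t_2+\tau$ (not merely $t>t_2$) before the delay window $[t-\tau,t]$ is fully inside the terminal stage where $\beta\equiv b$, so that the estimate for $\gamma(t)$ becomes tight. Everything before that time contributes only a bounded additive constant $M$ to the exponent and is therefore harmless.
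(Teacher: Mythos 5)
Your proof is correct and follows essentially the same route as the paper: both use the plateau $\beta\equiv b$ for $t>t_2$ so that $\gamma(t)=-Ge^{-b\tau}$ once $t>t_2+\tau$, split the exponent of the bound~\eqref{eq:ub} at $t_2+\tau$, and conclude exponential decay from $b-Ge^{-b\tau}<0$. Your write-up merely makes explicit the bounded constant $M$ and the requirement $t>t_2+\tau$, which the paper states more tersely.
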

\begin{proof}
Note that for $t>t_2$, we have $\beta(t) = b$ (see equation~\eqref{eq:beta}). Therefore, for all $t>t_2+\tau$, upper bound~\eqref{eq:ub} implies
$$C(t) \leq C_0 \exp\left[\int_0^{t_2+\tau}(\beta(s) + \gamma(s)) \id s\right]\exp\left[\left( b-Ge^{-b\tau}\right)(t-t_2-\tau)\right].$$
As a result, if $G>be^{b\tau}$, the \co concentration $C(t)$ tends to zero as $t\to \infty$.	
\end{proof}

We emphasize that, to avert the climate tipping point, it is not sufficient for the \co concentration to decay asymptotically. 
As we show in Section~\ref{sec:results}, even a transient growth of \co that exceeds the critical threshold $C_2$ will lead to a tipping point transition.

\section{Results and discussion}\label{sec:results}
In this section, we present the numerical results obtained from the model~\eqref{eqn:TempSDE}-\eqref{eqn:CO2SDE}.
First, we focus on the \co model~\eqref{eqn:CO2SDE} and investigate the behavior of its upper bound~\eqref{eq:ub}. Recall that to avoid the climate tipping point it is sufficient for this upper bound 
to remain below the critical level $C_2$.

Figure~\ref{fig:bound} shows the upper bound~\eqref{eq:ub} 
with the delay time $\tau = 1$ year, emission reduction time span $\Delta t = 50$ years, 
and the carbon capture rate $G =G_0= 2.37 \cdot 10^{-10}\;\mbox s^{-1}$.
The initial \co level is set at $C(0) = C_0 = 410\,$ppm, which is the estimated \co concentration in the year 2019~\cite{noaa}. Figure~\ref{fig:a} represents the optimistic scenario where emission reductions begin immediately ($t_1=0$)
and continue to decrease linearly for $\Delta t=50$ years. The figure shows three terminal emission rates $b=a/n$ with $n=2,5,10$. 
In every case, we observe a transient growth of \co levels. But none of them reach the tipping point $C_2$ and they decay asymptotically. 
\begin{figure}
	\centering
	\subfigure[]{\label{fig:a}\includegraphics[width=0.49\textwidth]{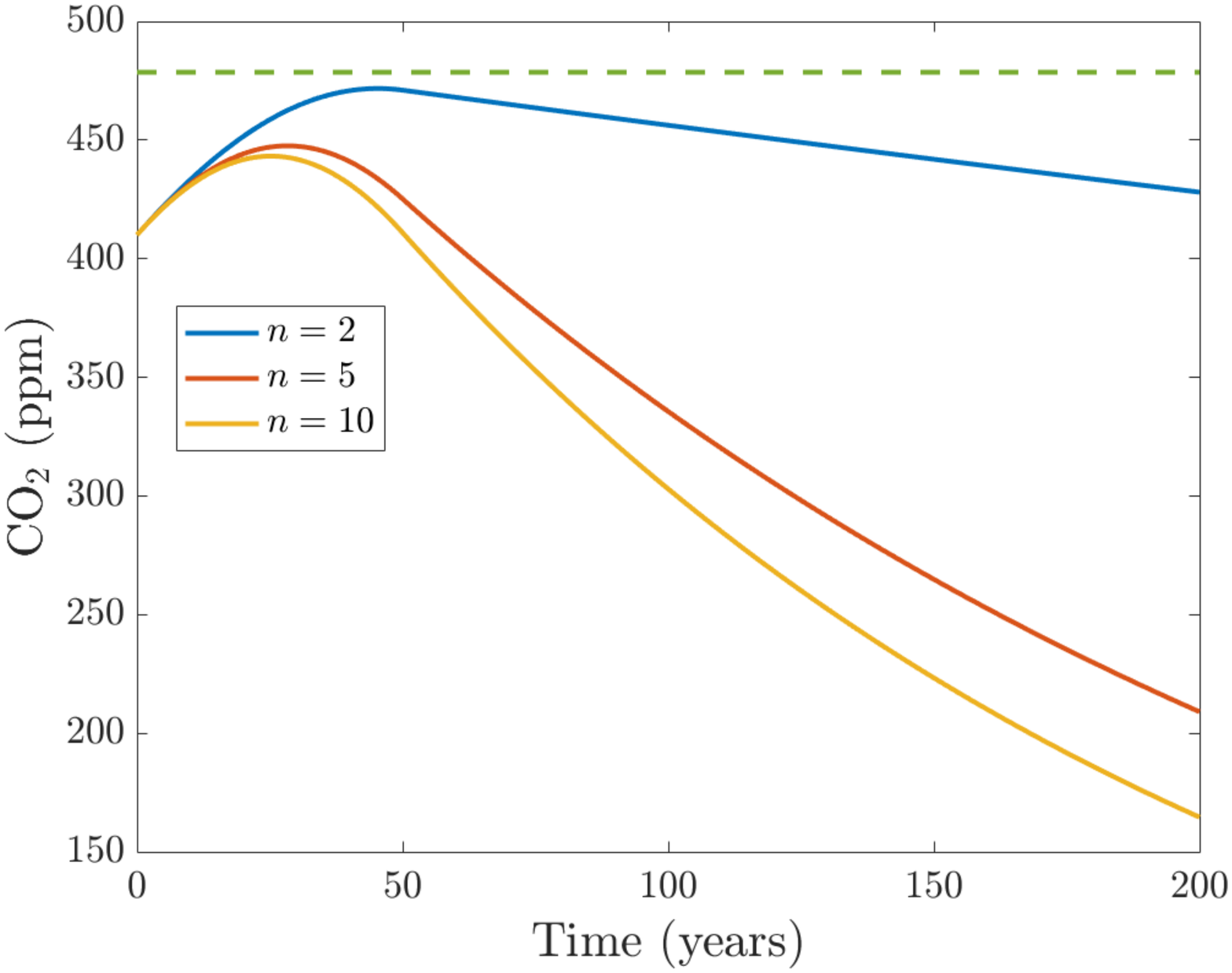}}
	\subfigure[]{\label{fig:b}\includegraphics[width=0.49\textwidth]{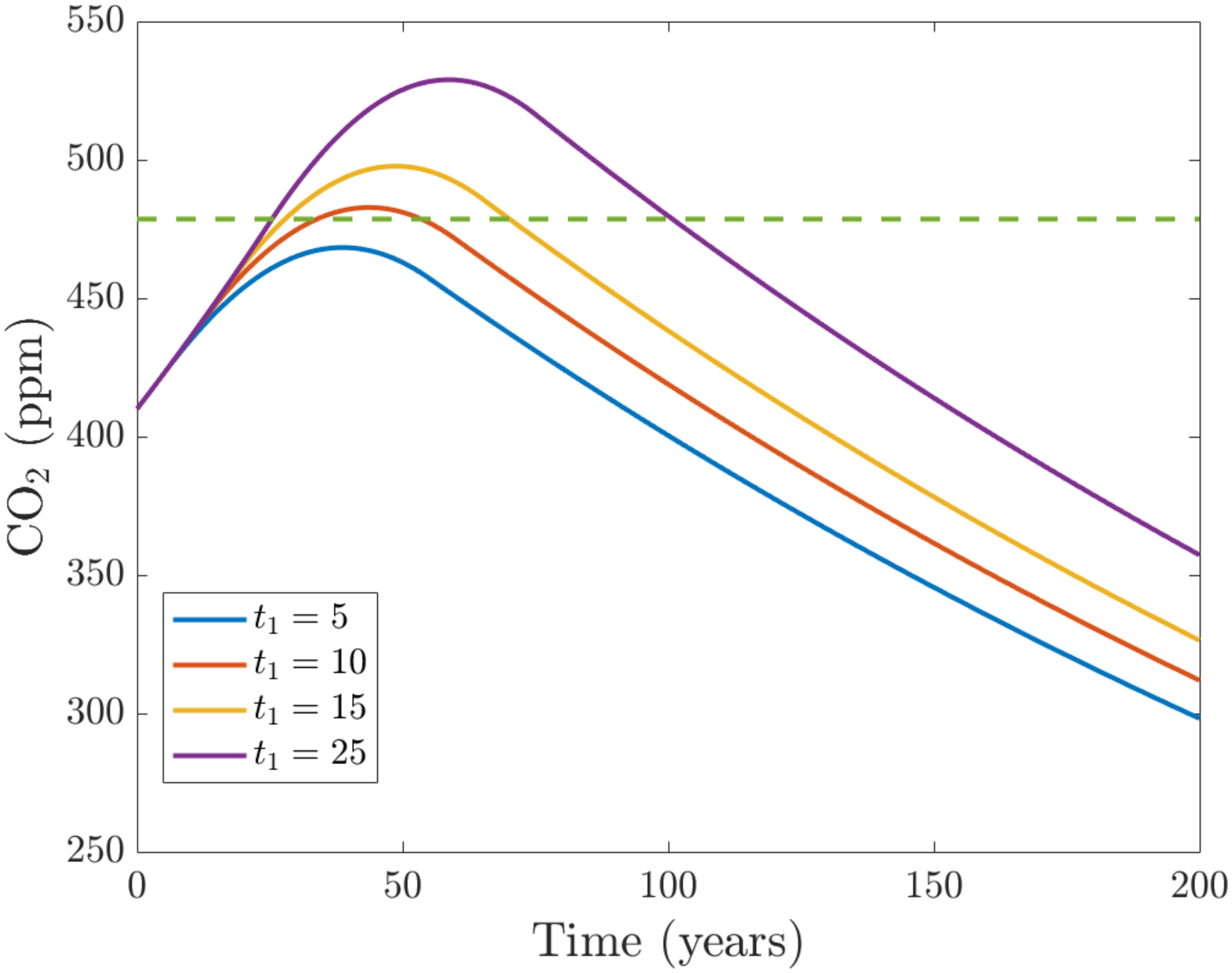}}
	\caption{The \co upper bound~\eqref{eq:ub} for parameters $\tau = 1$ year, $\Delta t = 50$ years, $G = 2.3669 \cdot 10^{-10}\;\mbox s^{-1}$. 
		The horizontal dashed line marks the tipping point $C=C_2 = 478.6$. Time $t=0$ corresponds to the year $2019$.
		(a) $b=a/n$ and $t_1=0$. 
		(b) $b=a/3$ and $t_1>0$.
		}
	\label{fig:bound}
\end{figure}

On the other hand, figure~\ref{fig:b} shows the case where the transient growth surpasses the tipping point $C=C_2$. 
In this figure, we fix the terminal \co emission rate at $b=a/3$ with the time window for reaching this rate being $\Delta t = 50$ years. 
We vary the number of years $t_1$ before the linear reduction in emission rate begins. If reductions begin in $t_1=5$ years, the tipping point can still be avoided. However, if $t_1\geq 10$ years, the tipping point may be reached and consequently the global mean surface temperature may 
sharply increase by about six degrees. These observations accentuate the need for immediate action on carbon emission reduction.

We emphasize that in every case shown in figure~\ref{fig:bound} the \co levels decay asymptotically. 
However, the more germane factor is whether the transient growth of \co  would exceed the climate tipping point
$C=C_2$.
\begin{figure}
	\centering
\includegraphics[width=\textwidth]{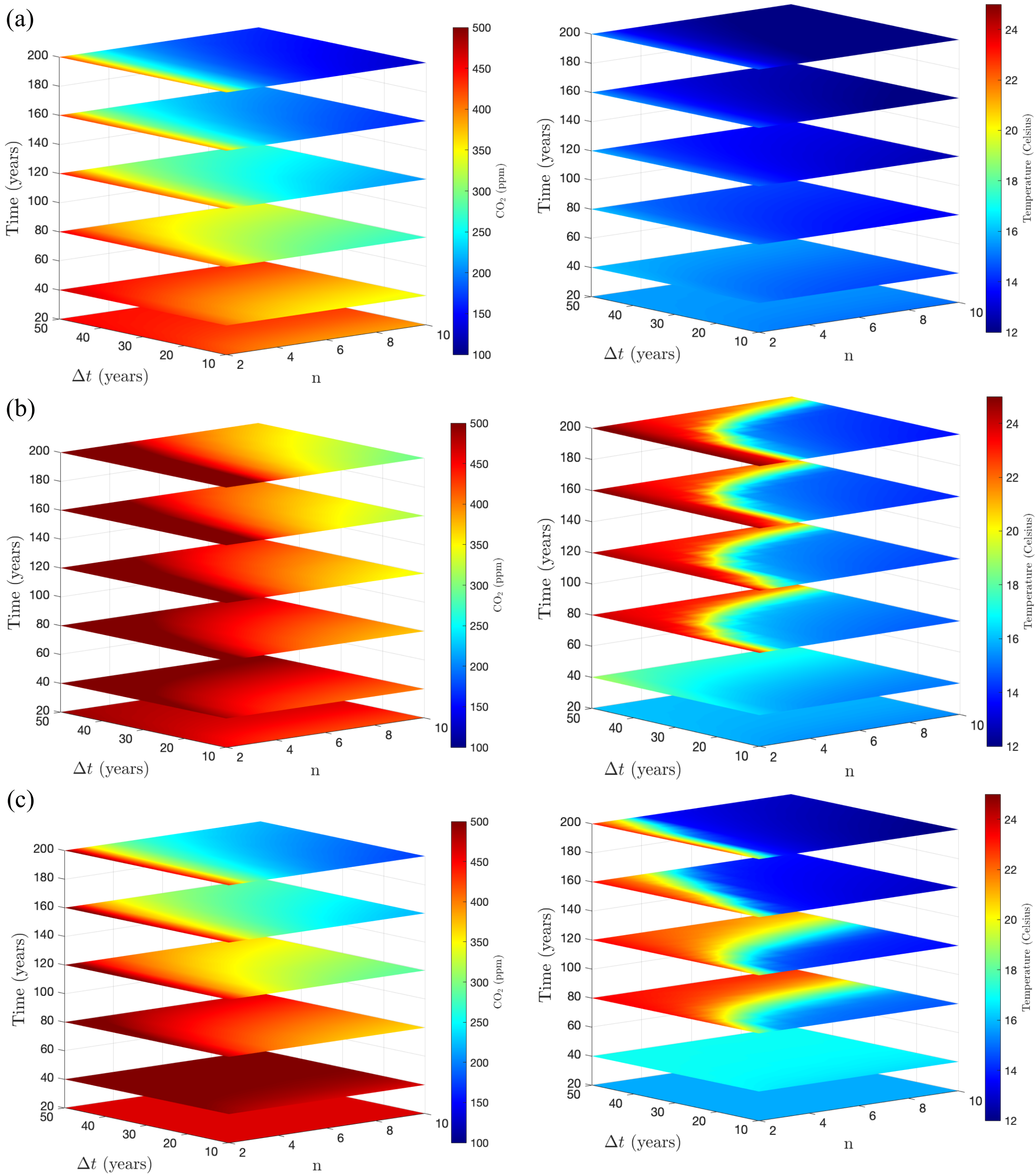}
	\caption{Ensemble averages of the \co concentration $C$ in the first column and the temperature $T$ in the second column, given the terminal emission parameter $n$ and reduction time $\Delta t$. The time $t=0$ corresponds to the year $2019$.
	(a) $t_1=0$ and $G=G_0$. (b) $t_1=0$ and $G=G_0/2$. (c) $t_1=25\,$ years and $G=G_0$.
	Note the transient decrease in \co concentration after 20 years for appropriate choices of parameters $n$ and $\Delta t$.}
	\label{fig:simulations}
\end{figure}

Next we consider the full model, i.e., the couple system of equations~\eqref{eqn:TempSDE}-\eqref{eqn:CO2SDE}. 
Recall that this is a one-way coupling, where the temperature $T$ is affected by the \co concentration through the green house effect.
In contrast, the \co concentration evolves independently of the temperature.

This system of stochastic delay differential equations is integrated numerically using the 
predictor-corrector scheme developed by Cao et al.~\cite{cao2015}.
The numerical integration is carried out after nondimensionalizing the equations by defining 
$\hat C = C/C_{p}$, $\hat T = T/T_0$, and $\hat t = t/t_0$, where $C_{p}=280\,$ppm denotes the preindustrial \co level, $T_0 = 288\,$K denotes the emissivity threshold given by Ashwin and von der Heydt~\cite{Ashwin_2019}, and $t_0=10^7\,$s is an arbitrary time scale (approximately one-third of a year). 
The time step of the numerical integrator in the non-dimensional time $\hat t $ is $0.0086$, which is equivalent to one day in dimensional time.

To investigate the behavior of the climate system with regards to various emission reduction scenarios, 
we simulate the model over a 200 year time span for a range of parameters $n$ and $\Delta t$. 
Recall that $n$ determines the terminal emission rate $b=a/n$ and $\Delta t$ determines the time it takes to reach this plateau (see figure~\ref{fig:beta}). 
For each set of parameters $(n,\Delta t)$, we simulate $10^3$ realizations of the stochastic climate model and compute the ensemble average
of the global mean surface temperature $T$ and the \co concentration. 
In all simulations, the initial conditions are set to $T(0) = 15^\circ\,$Celsius
and $C(0) =410\,$ppm, which reflect the estimated values in the year 2019~\cite{NASAtemp, Hansen2010}.

Figure~\ref{fig:simulations} shows the ensemble means of \co concentration $C$ and temperature $T$ as a function of time and the 
parameters $(n,\Delta t)$. This figure has three panels as described below:
\begin{enumerate}
	\item Figure~\ref{fig:simulations}(a): $t_1=0$ and $G=G_0= 2.37 \cdot 10^{-10}\,\mbox{s}^{-1}$.
	\item Figure~\ref{fig:simulations}(b): $t_1=0$ and $G= G_0/2= 1.18 \cdot 10^{-10}\,\mbox{s}^{-1}$.
	\item Figure~\ref{fig:simulations}(c): $t_1=25\,$years and $G=G_0= 2.37 \cdot 10^{-10}\,\mbox{s}^{-1}$.
\end{enumerate}
Figures~\ref{fig:simulations}(a)-(b) correspond to the scenario where emission reductions begin immediately ($t_1=0$). 
In panel (a), the carbon capture rate is equal to the empirical value $G_0=2.37 \cdot 10^{-10}\,\mbox{s}^{-1}$. In this case, the 
\co concentration increases transiently, but does not reach its critical value $C_2$. As a result, no tipping point phenomena 
is observed. In fact, the temperature does not exceed $16$ degrees Celsius and decays towards $12$ degrees asymptotically.

Panel (b) is identical to (a) except that we use a lower carbon capture rate $G= G_0/2=1.18 \cdot 10^{-10}\,\mbox{s}^{-1}$. The 
carbon capture rate may decrease over time due to various factors, but most prominently due to deforestation~\cite{Maldi2004,baccini2012}. 
In this case, the \co concentration still undergoes a transient growth before eventually decreasing. Because of the lower carbon capture rate, however,
the transient growth surpasses the critical threshold $C_2$, and leads to a drastic increase in the global mean surface temperature.
This tipping point does not occur for all carbon emission scenarios. If the transition period $\Delta t$ is short enough and the terminal 
emission rate $b=a/n$ is small enough, the tipping point can still be mitigated. 

In fact, as shown in figure~\ref{fig:standardeviation}(a), there is a nonlinear boundary in the parameter space $(n,\Delta t)$ which separates the 
tipping point transitions from its successful mitigation. For $n=2$, the tipping point occurs regardless of the transition time $\Delta t$. 
Therefore, the terminal carbon emission rate has to reduce to at least one-third of its current value to mitigate the climate tipping point. Larger $n$ allows for longer transition period $\Delta t$.
In figure~\ref{fig:standardeviation}(b), we show the ensemble average of the temperature $T$ 
for three parameter values $(n,\Delta t)$. The shaded areas mark one standard deviation from the mean. The point on the boundary $(n=4,\Delta t=29)$ has a large standard deviation because the stochastic process $\eta_c\id W_c$ can kick the trajectory towards the tipping regime or away from it. The parameters $n=3$ and $\Delta t=29$ lead to the tipping behavior with high probability, 
where the temperature rises to about $24^\circ$ Celsius. In contrast, $n=6$ successfully averts the climate tipping point, 
after a slight transient increase in the global mean surface temperature.

\begin{figure}
	\centering
		\includegraphics[width=\textwidth]{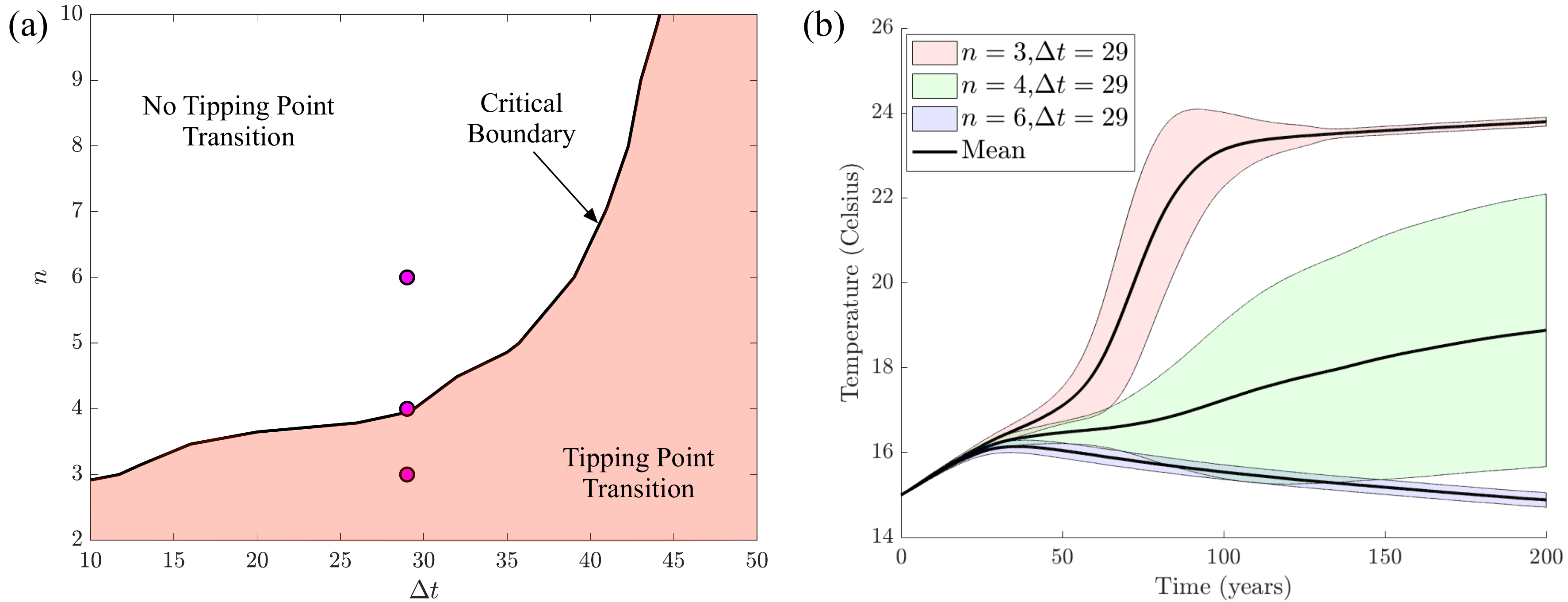}
	\caption{(a) The critical boundary in the parameter space $(n,\Delta t)$ demarcating the climate tipping point regime.
		(b) Ensemble average of the temperature for three parameter values marked by circles in panel (a). The shaded regions mark one standard deviation around the mean.}
	\label{fig:standardeviation}
\end{figure}

Finally, figure~\ref{fig:simulations}(c) shows the case where emission reduction is delayed by 25 years ($t_1=25$), and the 
carbon capture rate stays at its current level ($G=G_0$). In this case, \co concentration again exhibits a transient growth past the critical value
$C_2$, and as a result, a drastic increase in global temperature ensues. Although the \co concentration reduces asymptotically to as low as $150\,$ppm, 
the transient tipping point will have dire consequences, such as sea level rise and droughts.
This observation reaffirms the need for immediate action towards significant emission reduction. 
Otherwise, the only alternative would be to significantly increase the \co sinks $G$, through artificial carbon capture technologies.

\section{Conclusions}\label{sec:conc}
We considered a stochastic climate model as a one-way coupling between \co concentration $C$ and the global mean surface temperature $T$. 
The \co concentration is governed by a stochastic delay differential equation, allowing for the modeling of various emission reduction and carbon capture scenarios. The temperature $T$ satisfies a stochastic differential equation derived from energy balance (Budyko--Sellers model). The temperature is coupled with the \co equation to model the effect of greenhouse gasses.

The model has a tipping point behavior: if the \co concentration exceeds the critical value of $C=478.6$ ppm, the global mean surface temperature will increase abruptly by about six degrees Celsius. The \co model exhibits transient growth which allows for reaching this tipping point in finite time, even when \co decreases asymptotically. We derived a tight upper bound for the \co concentration, which provides sufficient conditions for mitigating the climate tipping point (see Theorem~\ref{thm:main}). This upper bound depends on several parameters such as the \co emission rate $\beta(t)$, the carbon capture rate $G$, and the delay time $\tau$. However, since the upper bound is explicitly known, it can be analyzed numerically with low computational cost in order to determine the emission reduction and carbon capture scenarios which would mitigate the climate tipping point.

We examined various emission reduction scenarios by combining our analytic results with Monte Carlo simulations of the climate model. In particular, we find that the climate tipping point in our model can be averted if \co emission reductions begin immediately 
and the emission rate decreases to one-third of its current level within 50 years. However, if these reductions are delayed by as much as ten years, the transient growth of the \co concentration will exceed the tipping point value, leading to a drastic and abrupt increase in the global mean surface temperature. In the latter case, increasing the carbon capture rate could still avert the tipping point.

It remains to be seen whether our conclusions carry over to more sophisticated climate models, such as box models or general circulation models. We will investigate such models in future work. Although the temperature model would be more complex, our analysis of the \co concentration can be used to derive sufficient conditions that guarantee tipping point evasion.
Furthermore, deriving an invariant probability distribution for the \co model is of great interest as it would quantify the probability 
of transitions which in turn leads to necessary and sufficient conditions for mitigating the climate tipping point.
Finally, the \co model itself can be improved, e.g., by using data-driven discovery methods to obtain simple models that agree with observational data~\cite{Brunton2016b}.\vskip6pt

\enlargethispage{20pt}

\section*{Data availability}
The data and code for reproducing the results of this manuscript are publicly available at \href{https://github.com/mfarazmand/ClimateMitigation.git}{https://github.com/mfarazmand/ClimateMitigation}

\section*{Acknowledgments}
We are grateful to Prof. Tam\'as T\'el for recommending reference~\cite{Ashwin_2019}, and Prof. Pedram Hassanzadeh for fruitful conversations.
This work was partially supported by the National Science Foundation grant DMS-1745654.


\begin{appendices}
\section{Proof of Theorem~\ref{thm:main}}\label{app:ub}
We begin by writing equation~\eqref{eqn:CO2DE} as the integral equation,
\begin{equation}
C \left (t \right ) = C_0+\int_{0}^{t} \beta \left(s\right)C\left(s\right) \id s - \int_{0}^{t} GC\left(s-\tau\right) \id s.
\label{eqn:integral}
\end{equation}
We derive a tight upper bound for $C\left(t\right)$ using a modified version of a Gr\"onwall-type inequality for delay differential equations given in~\cite{Horvath}. We first prove a result with $\beta\equiv 0$, which we later use to address the general case where the emission rate $\beta$ is non-zero.
\begin{proposition}\label{prop:ub}
	Let $t_0 \in \mathbb{R}$, $t_0 < T \leq \infty$, $C_0 \geq 0$, and $a:[t_0 , T) \to \mathbb{R}^+$ be locally integrable.
	Let $C:[t_0, T) \to \mathbb{R}^+$ be Borel measurable and locally bounded such that
	\begin{equation}
	C\left(t\right) \leq C_0 - \int_{t_0}^{t} a\left(u\right) C\left(u-\tau\right) \id u,
	\label{eq:int_C_01}
	\end{equation}
	and $\gamma:[t_0 - \tau, T) \to \mathbb{R}$ be any locally integrable function that satisfies the inequality,
	\begin{equation}
		-a(t) \exp{\left(-\int_{t-\tau}^{t}\gamma\left(s\right) \id s\right)} \leq \gamma\left(t\right).
		\label{ineq:gamma}
	\end{equation}
	Then
	$$
	C\left(t\right) \leq K\exp{\left(\int_{t_0}^{t} \gamma\left(s\right) \id s\right)},
	$$
	where
	$$
	K :=\max \left\{ C_0\exp \left( \int_{t_0 -\tau}^{t_0} \gamma\left(u\right) \id u \right), \sup_{t_0 - \tau \leq s \leq t_0} \exp \left(\int_{s}^{t_0} \gamma\left(u\right) \id u \right) \right \}.
	$$
\end{proposition}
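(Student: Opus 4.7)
The plan is to construct the candidate barrier $\phi(t) := K\exp\bigl(\int_{t_0}^t \gamma(s)\,\id s\bigr)$ and then prove $C\le\phi$ by a delay-Gr\"onwall comparison. The first step is to verify that $\phi$ itself satisfies an integral inequality in the direction opposite to the hypothesis on $C$. Since $\phi'(t)=\gamma(t)\phi(t)$ and, by construction, $\phi(t-\tau)=\phi(t)\exp\bigl(-\int_{t-\tau}^t\gamma(s)\,\id s\bigr)$, the assumption $\gamma(t)\ge -a(t)\exp\bigl(-\int_{t-\tau}^t\gamma(s)\,\id s\bigr)$ translates into the pointwise bound $\phi'(t)\ge -a(t)\phi(t-\tau)$. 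Integrating from $t_0$ to $t$ yields the reverse integral inequality
\[
\phi(t)\ge K - \int_{t_0}^t a(s)\,\phi(s-\tau)\,\id s.
\]

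Next, I would establish the pointwise comparison $C(s)\le\phi(s)$ on the initial segment $[t_0-\tau,t_0]$. The two terms in the definition of $K$ are engineered precisely for this purpose: the factor $C_0\exp\bigl(\int_{t_0-\tau}^{t_0}\gamma\bigr)$ controls $\phi$ at the left endpoint, where $C$ is bounded by $C_0$, while $\sup_s\exp\bigl(\int_s^{t_0}\gamma\bigr)$ absorbs the largest exponential factor encountered as $s$ ranges over the delay window, so that $C(s)\le\phi(s)$ holds uniformly. With the reverse inequality for $\phi$ and the initial comparison in place, setting $F:=\phi-C$ and subtracting the hypothesis from the reverse inequality gives
\[
F(t)\ge (K-C_0) - \int_{t_0}^t a(s)\,F(s-\tau)\,\id s.
\]
I would then propagate $F\ge 0$ by induction on intervals of length $\tau$: on $[t_0+k\tau,\,t_0+(k+1)\tau]$ the delayed argument $s-\tau$ lies in the previous interval where $F\ge 0$ has already been established, so the integral term is controlled and, combined with $K\ge C_0$, the estimate closes.

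The principal obstacle is the minus sign in front of the integral term: the hypothesis is a ``sink-only'' inequality $C(t)\le C_0-\int a\,C(s-\tau)\,\id s$ rather than the standard Gr\"onwall form $C(t)\le C_0+\int a\,C$, so one cannot simply exponentiate a scalar integral inequality to conclude. All the work is pushed into the careful design of the barrier $\phi$ via the transcendental inequality for $\gamma$, and into the step-by-step $\tau$-interval induction, which must be arranged so that the delayed comparison $C(s-\tau)\le\phi(s-\tau)$ is always in hand one delay-step before it is needed. A secondary technical point is that $C$ is only Borel measurable and locally bounded rather than continuous, so any pointwise first-crossing argument should be replaced by an integral-based comparison to avoid null-set pathologies.
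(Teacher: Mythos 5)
Your barrier construction is sound as far as it goes: with $\phi(t)=K\exp\bigl(\int_{t_0}^t\gamma(s)\,\id s\bigr)$, the hypothesis \eqref{ineq:gamma} and $\phi(t-\tau)=\phi(t)\exp\bigl(-\int_{t-\tau}^t\gamma\bigr)$ do give $\phi(t)\ge K-\int_{t_0}^t a(s)\phi(s-\tau)\,\id s$, and the subtraction yields $F(t)\ge (K-C_0)-\int_{t_0}^t a(s)F(s-\tau)\,\id s$ for $F=\phi-C$. The proof breaks at the propagation step. Knowing $F\ge 0$ on the previous $\tau$-interval only tells you that the integral term is nonpositive, i.e.\ it \emph{subtracts} an uncontrolled nonnegative quantity from $K-C_0$; nothing forces $F(t)\ge 0$ on the next interval. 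The comparison principle you are implicitly invoking requires the right-hand side to be nondecreasing in the delayed argument (quasi-monotonicity), and the minus sign puts you exactly in the case where it fails -- so the barrier-plus-$\tau$-step induction does not remove the obstacle you yourself flagged. The failure is not hypothetical: in the intended application the history is constant and $\gamma<0$, so $K=C_0$ and your inequality reads $F(t)\ge-\int_{t_0}^t a(s)F(s-\tau)\,\id s$, which is vacuous. A related symptom is that your argument never uses $C\ge 0$, yet the conclusion is false without it (a large negative history makes $C_0-\int a\,C(\cdot-\tau)$ arbitrarily large while $K\exp\bigl(\int\gamma\bigr)$ stays bounded), so nonnegativity must enter the proof in an essential way.

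The paper closes the estimate differently, following Gy\"ori--Horv\'ath: substitute $y(t)=C(t)\exp\bigl(-\int_{t_0-\tau}^t\gamma\bigr)$, use $y\ge 0$ together with \eqref{ineq:gamma} to replace the kernel $-a(u)\exp\bigl(-\int_{u-\tau}^u\gamma\bigr)$ by $\gamma(u)$ \emph{inside} the integral, and then prove $y\le L$ (a constant determined by the initial segment) by a first-crossing contradiction in which $\int_{t_0}^{t_1}\gamma(u)\exp\bigl(\int_{t_0-\tau}^u\gamma\bigr)\,\id u$ is evaluated in closed form, telescoping to a difference of exponentials; it is this exact integration, not a sign argument on the integrand, that makes the bound close even though $\gamma$ is negative. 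If you want to keep the barrier picture you would have to reproduce essentially that substitution-and-exact-integration step. One smaller point: your initial-segment comparison $C\le\phi$ on $[t_0-\tau,t_0]$ requires the supremum in $K$ to carry the factor $C(s)$ (as it does in the paper's general theorem and in the constant $L$ used in its proof); make that explicit rather than relying on the exponential factor alone.
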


\begin{proof}
	Note that in the context of our climate model, we have $a(t) = G$.  Gy\"ori and Horv\'ath~\cite{Horvath} proved a similar result with 
	a plus sign in front of the integral in Eq.~\eqref{eq:int_C_01}. We generalize their result to the case 
	with a minus sign in front of the integral as required by our climate model.
	We first define
	\begin{align*}
	y(t) = C\left(t\right)\exp{\left(-\int_{t_0-\tau}^{t} \gamma\left(s\right) \id s\right)}.
	\end{align*}
	Inequality~\eqref{eq:int_C_01} implies
	\begin{align*}
	y(t) &\leq C_0\exp{\left(-\int_{t_0-\tau}^{t}\gamma\left(s\right) \id s\right)} - \exp{\left(-\int_{t_0-\tau}^{t} \gamma\left(s\right) \id s\right)} \\
	&\times \int_{t_0}^{t} a\left(u\right) y\left(u-\tau\right)\exp{\left(-\int_{u-\tau}^{u} \gamma\left(s\right) \id s\right)}\exp{\left(\int_{t_0 - \tau}^{u} \gamma\left(s\right) \id s\right)}\id u.
	\end{align*}
	The inequality~\eqref{ineq:gamma} for $\gamma$ then yields
	\begin{align*}
	y(t) \leq & C_0 \exp{\left(-\int_{t_0 -\tau}^{t} \gamma\left(s\right) \id s\right)} +\nonumber\\
	 & \exp{\left(-\int_{t_0 -\tau}^{t} \gamma\left(s\right) \id s\right)}\int_{t_0}^{t} \gamma\left(u\right) \exp{\left(\int_{t_0 -\tau}^{u} \gamma\left(s\right) \id s\right)}y\left(u -\tau\right) \id u.
	\end{align*}
	Defining, 
	$$
	L := \max \left \{C_0 , \sup_{t_0 - \tau \leq s \leq t_0} C(s)\exp{\left(-\int_{t_0 - \tau}^{s} \gamma\left(u\right) \id u\right)} \right \},
	$$
	we have $y(t) \leq L$ for all $t\in [t_0-\tau,t_0]$. We now prove that, in fact, $y\left(t\right)\leq L$ for all $t\in [t_0-\tau,T)$.

	Let $L_1 > L$. By the definition of $L$, we have
	\begin{align*}
	y(t) \leq L < L_1,\quad  t_0 -\tau \leq t \leq t_0
	\end{align*}
	Since $y$ is continuous on $[t_0,T]$, there exists $q>0$ such that $t_0 + q <T$ and
	\begin{align*}
	y(t) < L_1 ,\quad  t_0 \leq t \leq t_0 + q
	\end{align*}
	Assume there exists $t_1 \in (t_0 + q,T)$ such that $y(t_1)=L_1$. Since $y$ is continuous on $[t_0,T)$, we can assume that $y(t) < L_1$ for all $t \in [t_0,t_1)$.
	
	It then follows from $L_1 > L \geq C_0$ that
	\begin{align*}
	y(t_1) &< C_0\exp{\left(-\int_{t_0 - \tau}^{t_1} \gamma(s) \id s \right)} \\
	&+ \exp{\left(-\int_{t_0-\tau}^{t_1} \gamma(s) \id s \right)}\int_{t_0}^{t_1}\gamma(u)\exp{\left(\int_{t_0-\tau}^{u} \gamma(s) \id s \right)}L_1 \id u \\
	&= C_0\exp{\left(-\int_{t_0 -\tau}^{t_1} \gamma(s) \id s \right)} + L_1(1-\exp{\left(-\int_{t_0}^{t_1} \gamma(s) \id s \right)}) \\
	&= L1 + \exp{\left(-\int_{t_0-\tau}^{t_1}\gamma(s) \id s \right)}\left(C_0 - L_1\exp{\left(\int_{t_0-\tau}^{t_0}\gamma(s) \id s \right)} \right) \\
	&< L_1
	\end{align*}
	This contradicts $y(t_1) = L_1$. Therefore, we must have
	$y(t) < L_1$ for all $t\in[t_0-\tau,T)$
	Since $L_1>L$ is arbitrarily close to $L$ and $y$ is continuous, we have $y(t) \leq L$ for all $t\in[t_0-\tau,T)$.
	
	Finally, $y(t)\leq L$ implies
	\begin{align*}
	C(t) &= y(t)\exp{\left(\int_{t_0 - \tau}^{t} \gamma\left(s\right) \id s\right)} \\
	&\leq L \exp{\left(\int_{t_0 - \tau}^{t} \gamma\left(s\right) \id s\right)} \\
	&= K\exp{\left(\int_{t_0}^{t} \gamma\left(s\right) \id s\right)}
	\end{align*}
\end{proof}

Now we return to equation~\eqref{eqn:integral} and derive an upper bound for $C(t)$ for the general case with $\beta(t)\geq 0$.
\begin{theorem}\label{thm:ub_gen}
	Let $t_0 \in \mathbb{R}$, $t_0 < T \leq \infty$, and $C_0 \geq 0$, $\beta: [t_0, T) \to \mathbb{R}^+$ be locally integrable.
	Let $C:[t_0 - \tau, T ) \to \mathbb{R}^+$ be Borel measurable and locally bounded such that
	\begin{equation}
	C(t) \leq C_0 + \int_{t_0}^{t} \beta\left(u\right) C\left(u\right) \id u - \int_{t_0}^{t} GC\left(u-\tau\right) \id u, 
	\label{eq:C_int_02}
	\end{equation}
	and $\gamma:[t_0 - \tau, T) \to \mathbb{R}$ be any locally integrable function satisfying
	\begin{equation}
	-G \exp{\left(-\int_{t-\tau}^{t}\left(\beta\left(s\right) + \gamma\left(s\right)\right)\right)} \leq \gamma\left(t\right).
	\label{eq:gamma_ineq}
	\end{equation}
	Then
	$$
	C(t) \leq K \exp{\left(\int_{t_0}^{t} \left(\beta\left(s\right) + \gamma\left(s\right)\right)\id s\right)},
	$$
	where 
	$$K := \max \left\{C_0\exp{\left(\int_{t_0 - \tau}^{t_0} \gamma(u) \id u\right)}, \sup_{t_0 - \tau \leq s \leq t_0} C\left(s\right) \exp{\left(\int_{s}^{t_0}\gamma(u) \id u\right)}\right\}.$$
\end{theorem}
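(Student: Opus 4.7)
The plan is to reduce Theorem~\ref{thm:ub_gen} to Proposition~\ref{prop:ub} through a multiplicative change of variable that absorbs the emission term $\beta(t)C(t)$. Set $\phi(t) := \exp\bigl(\int_{t_0}^{t}\beta(s)\,\id s\bigr)$ and define $\tilde C(t) := C(t)/\phi(t)$ for $t\geq t_0$, extending $\tilde C(s)=C(s)$ on $[t_0-\tau,t_0]$ (equivalently, taking $\beta\equiv 0$ on $[t_0-\tau,t_0)$ so that $\phi\equiv 1$ there). The intended payoff is that $\tilde C$ satisfies a $\beta$-free integral inequality of exactly the shape handled by Proposition~\ref{prop:ub}, with an effective decay kernel $a(u) = G\exp\bigl(-\int_{u-\tau}^{u}\beta(s)\,\id s\bigr)$.

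The first concrete step is to apply a Gr\"onwall-type integrating-factor argument to~\eqref{eq:C_int_02}, treating the delay integral $f(t):=-\int_{t_0}^{t}GC(u-\tau)\,\id u$ as a (non-positive) forcing. This gives $C(t)\leq C_0\phi(t) - \int_{t_0}^{t}\phi(t)\phi(u)^{-1}\,G\,C(u-\tau)\,\id u$. Dividing by $\phi(t)$ and rewriting $C(u-\tau)=\phi(u-\tau)\tilde C(u-\tau)$, the ratio $\phi(u-\tau)/\phi(u)$ collapses to $\exp\bigl(-\int_{u-\tau}^{u}\beta(s)\,\id s\bigr)$ both when $u-\tau\geq t_0$ and, thanks to the zero extension of $\beta$, when $u-\tau<t_0$. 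The resulting inequality
$$\tilde C(t) \leq C_0 - \int_{t_0}^{t} G\exp\Bigl(-\!\!\int_{u-\tau}^{u}\!\beta(s)\,\id s\Bigr)\,\tilde C(u-\tau)\,\id u$$
is precisely the hypothesis of Proposition~\ref{prop:ub} with $a(u)=G\exp(-\int_{u-\tau}^{u}\beta)$.

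Next, I would verify that condition~\eqref{ineq:gamma} of the proposition, applied to this $a$, becomes $-G\exp\bigl(-\int_{t-\tau}^{t}(\beta(s)+\gamma(s))\,\id s\bigr)\leq \gamma(t)$, which is exactly~\eqref{eq:gamma_ineq}. Proposition~\ref{prop:ub} then yields $\tilde C(t)\leq K\exp\bigl(\int_{t_0}^{t}\gamma\bigr)$, and since $\tilde C\equiv C$ on the history interval $[t_0-\tau,t_0]$, the constant $K$ coming out of the proposition matches the one stated in Theorem~\ref{thm:ub_gen} verbatim. Multiplying back by $\phi(t)$ restores the factor $\exp\bigl(\int_{t_0}^{t}\beta\bigr)$ and produces the claimed bound $C(t)\leq K\exp\bigl(\int_{t_0}^{t}(\beta+\gamma)\bigr)$.

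The main obstacle I anticipate is executing the integrating-factor step rigorously at the level of an integral inequality, since $C$ is only assumed measurable and locally bounded, not differentiable. I would handle this either by a direct manipulation using Fubini on the double integral arising from iterating~\eqref{eq:C_int_02}, or by noting that the standard linear Gr\"onwall inequality (applied pointwise in $t$ with the delay term absorbed into the forcing $f$) is valid for Borel-measurable, locally bounded $C$. A secondary bookkeeping point is that the extension of $\beta$ by zero on $[t_0-\tau,t_0)$ must be consistent with how the delay term $C(u-\tau)$ is evaluated for $u\in[t_0,t_0+\tau]$; this is ensured by design but worth checking explicitly before invoking the proposition.
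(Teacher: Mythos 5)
Your proposal is correct and follows essentially the same route as the paper's proof: the substitution $\tilde C(t)=C(t)\exp\bigl(-\int_{t_0}^{t}\beta\bigr)$ combined with variation of constants is exactly the paper's change of variable $y(t)=C(t)\exp\bigl(-\int_{t_0}^{t}\beta(v)\,\id v\bigr)$, which reduces \eqref{eq:C_int_02} to the $\beta$-free inequality of Proposition~\ref{prop:ub} with kernel $a(u)=G\exp\bigl(-\int_{u-\tau}^{u}\beta\bigr)$, after which \eqref{eq:gamma_ineq} is precisely condition \eqref{ineq:gamma} and multiplying back by $\exp\bigl(\int_{t_0}^{t}\beta\bigr)$ gives the claim. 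Your extra attention to extending $\beta$ by zero on the history interval and to justifying the integrating-factor step for merely measurable, locally bounded $C$ only makes explicit bookkeeping that the paper leaves implicit.
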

\begin{proof}
	This result was first proved in~\cite{Horvath} with a plus sign before the last integral in~\eqref{eq:C_int_02}. 
	We generalize their result to the case 
	with a minus sign in front of the integral as required by our climate model. We also point out that our \co model satisfies~\eqref{eq:C_int_02} with equality.
	
	Applying variation of constants to~\eqref{eq:C_int_02}, we obtain
	\begin{align*} C(t) \leq C_0 \exp{\left(\int_{t_0}^{t} \beta\left(v\right) \id v\right)} - \int_{t_0}^{t} G C\left(u-\tau\right)\exp{\left(\int_{u}^{t} \beta\left(v\right) \id v\right)}\id u.
	\end{align*}
	Defining,
	\begin{align*}
	y(t) = C(t) \exp{\left(\int_{t_0}^{t}-\beta\left(v\right) \id v\right)},
	\end{align*}
	we have
	\begin{align*}
	y(t) \leq C_0 - \int_{t_0}^{t} G \exp{\left(-\int_{u-\tau}^{u} \beta(v) \id v\right)}y\left(u-\tau\right) \id u.
	\end{align*}
	The function,
	\begin{align*}
	t \to G\exp{\left(-\int_{t-\tau}^{t} \beta\left(v\right) \id v\right)},
	\end{align*}
	is locally measurable. Thus, applying Proposition~\ref{prop:ub}, we obtain
	\begin{align*}
	y(t) \leq K \exp{\left(\int_{t_0}^{t} \gamma(s) \id s\right)},
	\end{align*}
	where
	\begin{align*}
	K := \max \left\{C_0\exp{\left(\int_{t_0 - \tau}^{t_0} \gamma(u) \id u\right)}, \sup_{t_0 - \tau \leq s \leq t_0} C\left(s\right) \exp{\left(\int_{s}^{t_0}\gamma(u) \id u\right)}\right\}.
	\end{align*}
	The conclusion then follows immediately.
\end{proof}

Although this theorem provides an upper bound for equation~\eqref{eqn:CO2DE}, the issue of finding a suitable choice of $\gamma(t)$ remains. 
To obtain a tight upper bound, $\gamma(t)$ must be strictly negative. To this end, we choose
\begin{align*}
\gamma(t) = -G\exp{\left(-\int_{t-\tau}^{t} \beta\left(s\right) \id s\right)},
\end{align*}
which is negative definite and satisfies the inequality~\eqref{eq:gamma_ineq}.
Recall that we assumed the constant initial condition $C(s)=C_0$ for all $s \in [-\tau,0]$ (cf. equation~\eqref{eqn:CO2DE}). This observation, together with the fact that $\gamma$ is negative, yields $K = C_0$.
Therefore the upper bound in Theorem~\ref{thm:ub_gen} simplifies to 
\begin{align*}
C(t) \leq C_0\exp{\left(\int_{0}^{t} \left(\beta\left(s\right) + \gamma\left(s\right) \right)\id s \right)},
\end{align*}
which is the desired result of Theorem~\ref{thm:main}.

\section{Model parameters}\label{app:params}
Our model parameters mostly agree with those chosen by Dijkstra and Viebahn \cite{Dijkstra2015}. However, there are a few differences that we justify in this section. These changes are mostly made so that our model parameters more closely approximate the current climate state.

The main differences appear in the albedo function~\eqref{eq:albedo}.
Dijkstra and Viebahn \cite{Dijkstra2015} use the initial albedo value $\alpha_1 = 0.7$ which does not agree with the estimates of the current albedo of the Earth.
We instead use $\alpha_1 = 0.31$ which agrees with the value inferred from empirical satellite observations~\cite{Vonder1971, Stephens2012, Goode2001}. The temperature dependent albedo~\eqref{eq:albedo} requires albedo values to be paired with a corresponding temperature threshold. A reasonable choice is to pair the current albedo of Earth with the temperature $T_1 = 289^{\circ}\,$K, which roughly corresponds to the current global mean surface temperature.

The terminal albedo value $\alpha_2 = 0.2$, with the corresponding threshold temperature $T_2=295\,$K, is chosen to reflect the state of the Earth after significant amounts of warming. We point out that the terminal albedo $\alpha_2=0.2$ is most likely unrealistic since it will require a significant reduction in cloud coverage in addition to sea level rise. However, the model does not allow for much larger values of $\alpha_2$
without losing the climate tipping point. As seen in figure~\ref{fig:scurveloop}, for any $\alpha_2 > 0.22$, the fold in the bifurcation curve ceases to exist and, as a result, the irreversible climate tipping point vanishes.
\begin{figure}
	\centering
	\includegraphics[width=0.5\textwidth]{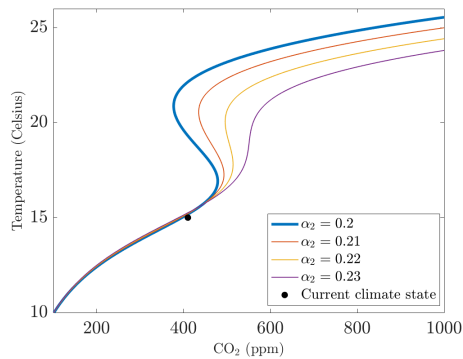}
	\caption{Birfucation curves corresponding to various values of the terminal albedo $\alpha_2$.}
	\label{fig:scurveloop}
\end{figure}
 
Dijkstra and Viebahn~\cite{Dijkstra2015} use the smoothness parameter $T_\alpha = 0.273$ which leads to a rapid change between the initial albedo $\alpha_1$
and the terminal albedo $\alpha_2$. Following Ashwin and von der Heydt~\cite{Ashwin_2019}, we use $T_\alpha=3$ which corresponds to a more gradually decreasing albedo as shown in figure~\ref{fig:albedo}.

We use the \co forcing parameter $A=20.5\, W m^{-2}$. The direct effect of \co forcing yields the lower value $A \simeq 5\,Wm^{-2}$. However, this value neglects other effects such as ice albedo and water vapor~\cite{Hogg,Ashwin_2019}, and results in a climate sensitivity that is much lower than the range given in AR5~\cite{AR5}. We note that Ashwin and von der Heydt~\cite{Ashwin_2019} use $A = 5.35Wm^{-2}$, but they allow for a temperature-dependent emissivity function $\epsilon(T)$ to account for the water vapor feedback. With \co forcing parameter $A=20.5\, W m^{-2}$, the transient climate response (TCR) of our model is about $2.6^\circ$C and its equilibrium climate sensitivity (ECS) is about $6^\circ$C. Note that both the TCR and ECS of our model are only slightly larger than the current likely range given in AR5~\cite{AR5}, which are $1-2.5^\circ$C and $1.5-4.5^\circ$C, respectively.
\end{appendices}


\end{document}